\tikzset{
  REACTIVE/.style={
      circle,
      draw,
      very thick,
      font=\sffamily\tiny,
      label distance=1mm},
  EVENT/.style={REACTIVE, regular polygon,regular polygon sides=4},
  FARROW/.style={thick, arrows={-Triangle}},
  SOURCE/.style={REACTIVE},
  DIST/.style={densely dotted},
  DISTSOURCE/.style={REACTIVE, trapezium, densely dotted},
  MAP/.style={REACTIVE, rectangle},
  FOLD/.style={REACTIVE, regular polygon,regular polygon sides=6},
}
\newcommand{\rulefont}[1]{{\TirNameStyle{#1}}}
\definecolor{codegreen}{HTML}{008000}
\definecolor{codegray}{HTML}{606060}
\definecolor{codeorange}{HTML}{FFA040}
\definecolor{codered}{HTML}{D04020}
\definecolor{codeblue}{HTML}{06287e}
\definecolor{codelightblue}{HTML}{0e84b5}
\definecolor{codepurple}{HTML}{B00060}
\definecolor{comment}{HTML}{2d2d2d}
\definecolor{light-gray}{gray}{0.92}
\lstdefinestyle{mystyle}{
    backgroundcolor=\color{white},
    commentstyle=\itshape\color{codegreen},
    keywordstyle=\bfseries\color{codeblue},
    keywordstyle= [3]{\bfseries\color{codepurple}},
    keywordstyle= [2]{\bfseries\color{codeblue}},
    keywordstyle= [5]{},
    numberstyle=\color{codegray},
    stringstyle=\color{codeblue},
    basicstyle=\ttfamily\footnotesize,
    frame=tlbr, framesep=0.1cm, framerule=0pt,
    breakatwhitespace=false,
    breaklines=true,
    postbreak=\space\space\space\space,
    captionpos=b,
    keepspaces=true,
    numbers=left,
    stepnumber=1,
    numbersep=10pt,
    showspaces=false,
    showstringspaces=false,
    showtabs=false,
    tabsize=2,
    xleftmargin=20pt,
    xrightmargin=0pt
}
\lstdefinestyle{bw}{
    commentstyle=\itshape\color{gray},
    keywordstyle=\bfseries,
    keywordstyle= [2]{\bfseries},
    keywordstyle= [4]{\bfseries},
    numberstyle=\color{codegray},
    stringstyle=\color{codeblue},
    basicstyle=\ttfamily\footnotesize,
    frame=tlbr, framesep=0.1cm, framerule=0pt,
    breakatwhitespace=false,
    breaklines=true,
    postbreak=\space\space\space\space,
    captionpos=b,
    keepspaces=true,
    numbers=left,
    stepnumber=1,
    numbersep=10pt,
    showspaces=false,
    showstringspaces=false,
    showtabs=false,
    tabsize=2,
    xleftmargin=10pt,
    xrightmargin=10pt
}
\newif\ifshowtodos
\newcommand{\langname}{LoRe}
\definecolor{juliangreen}{HTML}{228b57}
\definecolor{annetteorange}{HTML}{e3782e}
\newtheorem{definition}{Definition}
\newtheorem{theorem}{Theorem}
\newtheorem{lemma}{Lemma}
\title{\langname: A Programming Model for Verifiably Safe Local-First Software}
\author{Julian Haas}
\affiliation{%
  \institution{Technische Universität Darmstadt}
  \country{Germany}
}
\author{Ragnar Mogk}
\affiliation{%
  \institution{Technische Universität Darmstadt}
  \country{Germany}
}
\author{Elena Yanakieva}
\affiliation{%
  \institution{University of Kaiserslautern-Landau}
  \country{Germany}
}
\author{Annette Bieniusa}
\affiliation{%
  \institution{University of Kaiserslautern-Landau}
  \country{Germany}
}
\author{Mira Mezini}
\affiliation{%
  \institution{Technische Universität Darmstadt}
  \country{Germany}
}
\keywords{Local-First Software, Reactive Programming, Invariants, Consistency, Automated Verification}
\begin{document}

\begin{abstract}
Local-first software manages and processes private data locally while still enabling collaboration between multiple parties connected via partially unreliable networks.
Such software typically involves interactions with users and the execution environment (the outside world).
The unpredictability of such interactions paired with their
decentralized nature make reasoning about the correctness of local-first software a challenging endeavor.
Yet, existing solutions to develop local-first software do not provide support for automated safety guarantees
and instead expect developers to reason about concurrent interactions in an environment 
with unreliable network conditions.

We propose \textit{\langname}, a programming model and compiler that automatically verifies developer-supplied
safety properties for local-first applications.
\textit{\langname} combines the declarative data flow of reactive programming with static analysis and
verification techniques 
to precisely determine concurrent interactions 
that violate safety invariants and to selectively employ strong consistency through coordination where required.
We propose a formalized proof principle and demonstrate how to automate 
the process in a prototype implementation
that outputs verified executable code.
Our evaluation shows that \textit{LoRe} simplifies the development of safe local-first software when compared to state-of-the-art approaches and that verification times are acceptable.
\end{abstract}

\maketitle
\hypertarget{sec:chap:intro}{%
\section{Introduction}\label{sec:chap:intro}}

Applications that enable multiple parties
connected via partially unreliable networks to
collaboratively process data prevail today.
An illustrative example is a distributed calendar application with services to add or modify appointments, where
a user may maintain multiple calendars on different devices,
may share calendars with other users, back them up in a cloud;
calendars must be accessible to users in a variety of scenarios,
including offline periods, e.g., while traveling --
yet, planning appointments may require coordination between multiple parties.
The calendar application is representative for other
collaborative data-driven software such as
group collaboration tools, digital (cross-organizational) supply chains,
multiplayer online gaming, and more.

The dominating software architecture for such applications is centralized:
data is collected, managed, and processed centrally in data centers, while devices on the edge of the communication infrastructure serve primarily as interfaces to users and the outside world. This architecture
simplifies the software running on edge devices since concerns like consistent data changes to ensure
safety properties are managed centrally.
However, this comes with issues including loss of control over data ownership and privacy, insufficient offline availability, poor latency, inefficient use of communication infrastructure, and waste of (powerful) computing resources on the edge.

To address these issues, local-first principles for software development have been formulated~\cite{Kleppmann2019LocalFirst}, calling for moving data management and processing to edge
devices instead of confining the data to clouds.
But for programming approaches that implement these principles to be viable
alternatives to the centralized approach, they must support automatically verifiable
safety guarantees to counter for the simplifying assumptions afforded by a centralized approach.
Unfortunately, existing approaches to programming local-first applications such as \emph{Yjs}~\cite{DBLP:conf/group/NicolaescuJDK16} or \emph{Automerge}~\cite{automerge} do not provide
such guarantees.
They use \emph{conflict-free replicated data types (CRDTs)}~\cite{Shapiro2011}
to store the parts of their state that is shared
across devices and rely on callbacks
for modeling and managing state that changes in both time and space.
CRDTs have been invented in the context of geo-replicated databases
and are available as off-the-shelf
databases~\cite{Shapiro:2018:JRCAntidote} or
libraries~\cite{Kleppmann2017, CRDTOptimizations}.
But the strongest consistency level ensured by CRDTs, causal consistency~\cite{Mahajan2011},
is not enough to maintain invariants that require coordination among the participants,
e.g., the invariant that employees should not enter
more than the available vacation days in a use of the calendar app in a business setting.
At the same time,
we need to delimit the scope of coordination
so as to \enquote{maximize combinations of availability and
consistency that make sense for a particular application}~\cite{Brewer2012}.
To find the set of interactions that actually need coordination, in a local-first setting,
one must reason about possible interleavings of their data flows "end-to-end", i.e.,
from the interface to the outside world, through device-local data-dependency chains,
to remote devices and back.
The unpredictability of the interactions triggered by the outside world,
concurrently at different devices, paired
with the absence of a central authority and the prevailing
implicit dependencies in current callback-centered programming models,
makes such reasoning without automated support a challenging, error-prone endeavour.

To close this gap, we propose a programming model for local-first applications that features explicit safety properties and automatically enforces them.
The model has three core building blocks: \emph{reactives}, \emph{invariants}, and \emph{interactions}.
\emph{Reactives} express values that change in time, but also in space by being replicated over multiple devices.
They enable systematic treatment of
complex state, dependencies, and concurrent changes,
enabling developers to reason in terms of functional composition.
\emph{Invariants} are formula in first-order logic specifying safety properties that must hold at all times
when the application interacts with the outside world, or values of reactives are observable.
\emph{Interactions} interface to the outside world and
encapsulate changes to all reactives affected by interactions with it
(state directly changed by the interactions, device-local
values derived from the changed state, and shared state at remote devices).
They serve as language-managed cross-device data flows that automatically use
best-in-class consistency.
On one device, interactions are logically processed instantaneously, i.e., no
intermediate states are observable, similar to atomicity in databases.
We use automatic verification with invariants as verification obligations to
identify interactions that need coordination across devices, for which
the compiler generates the coordination protocol; all other interactions become visible in causal order.
This way, the compiler makes an application-specific availability-safety trade-off.

The availability-safety trade-off has been explored in the systems and database
community under the term "coordination avoidance"
and there exist approaches that
leverage user-specified safety invariants
to synthesize distributed objects or to correctly configure the consistency level for
each database operation \cite{Balegas2015, Whittaker2018, DBLP:journals/pacmpl/PorreFPB21, Gotsman2016}.
Our work draws inspiration from these approaches, but differs in two key aspects:
(1) Instead of geo-replicated databases,
we target a peer-to-peer local-first setting with unique challenges; specifically, we do not assume any centralized authority and feature offline availability and interactions with the outside world.
(2) Instead of programming applications against the interface to a distributed datastore/object,
which fosters designs split into two separate tiers -- an automatically managed monolithic data store
and the application tier that uses the store's API -- we provide
a language-integrated mixed consistency with whole program guarantees, i.e.,
we verify the safety of derived data “all the way down”.
This is necessary in a local-first setting, where most of the complexity arises from the interactions with the external world
and are handled as part of the application logic, not as part of the data store.

\noindent
In summary, we make the following contributions:
\begin{enumerate}
\item A programming model for local-first applications with verified safety properties (Section~\ref{sec:goals}), called \langname{}.
While individual elements of the model, e.g., CRDTs or reactives, are not novel,
they are repurposed, combined, and extended
in a unique way to systematically address specific needs of local-first applications
with regard to ensuring safety properties.
\item
A formal definition of the model including a formal notion of invariant preservation and confluence for interactions, and a modular verification that invariants are never violated.
In particular, our model enables invariants that reason about the sequential behaviour of the program.
In case of potential invariant violation due to concurrent execution, \langname{} automatically adds the necessary  coordination logic (Section~\ref{sec:programming-model}).
\item A verifying compiler\footnote{\label{repo}The source code of our prototype implementation is available at \url{https://github.com/stg-tud/LoRe}.} that translates \langname{} programs to Viper~\cite{MuellerSchwerhoffSummers16}
for automated verification and to Scala for the application logic including synthesized synchronization to guarantee the specified
safety invariants (Section~\ref{sec:chap:implementation}).
\item An evaluation of \langname{} in two case studies (Section \ref{sec:evaluation}).
Our evaluation validates two claims we make about the programming model proposed, (a)
It facilitates the development of safe
local-first software, and (b) it enables an efficient and modular verification of safety properties.
It further shows that the additional safety properties offered by our model do not come with prohibitive costs in terms of verification effort and time.
\end{enumerate}

\section{\langname{} in a Nutshell}
\label{sec:goals}

We introduce the concepts of \langname{} along the example of a distributed calendar for tracking work meetings and vacation days.
\langname{} is an external DSL that compiles to Scala (for execution) and Viper IR~\cite{MuellerSchwerhoffSummers16} (for verification); its syntax is inspired by both. A \langname{} program defines a distributed application that runs on multiple physical or virtual devices.\footnote{We assume that every device is running the same application code (i.e., the same binary), and different types of devices (such as client and server) are modeled by limiting them to execute a subset of the defined interactions.}
Listing~\ref{lst:calendar} shows a simplified implementation of the calendar example application in \langname{}.
As any \langname{} program, it
consists of replicated state (\lstinline{Source} reactives in Lines~\ref{line:work}-\ref{line:vacation}), local values derived from them (\lstinline{Derived} reactives in Lines~\ref{line:all-appointments}-\ref{line:remaining-vacation}), interactions (Lines~\ref{line:calendar-interaction}-\ref{line:specialize-add-work}), and invariants (Lines~\ref{line:calendar-invariants-one}-\ref{line:calendar-invariants-two}).

\begin{lstlisting}[caption={The distributed calendar application.}, float=t, language=fr, escapechar={§}, label=lst:calendar]
type Calendar = AWSet[Appointment] §\label{line:calendar-awset}§
val work: Source[Calendar] = Source(AWSet()) §\label{line:work}§
val vacation: Source[Calendar] = Source(AWSet()) §\label{line:vacation}§

val all_appointments: Derived[Set[Appointment]]   = Derived{ work.toSet.union(vacation.toSet) }  §\label{line:all-appointments}§
val remaining_vacation: Derived[Int] = Derived{ 30 - sumDays(vacation.toSet) }   §\label{line:remaining-vacation}§

val add_appointment : Unit = Interaction[Calendar][Appointment]  §\label{line:calendar-interaction}§
  .requires{ cal => a => get_start(a) < get_end(a) } §\label{line:calendar-start-end-precondition}§
  .requires{ cal => a => !(a in cal.toSet)}
  .executes{ cal => a => cal.add(a) } §\label{line:calendar-add-executes}§
  .ensures { cal => a => a in cal.toSet } §\label{line:calendar-add-ensures}§
val add_vacation : Unit = add_appointment.modifies(vacation) §\label{line:specialize-interaction}§
  .requires{ cal => a => remaining_vacation - a.days >= 0}
val add_work     : Unit = add_appointment.modifies(work) §\label{line:specialize-add-work}§

UI.display(all_appointments, remaining_vacation) §\label{line:calendar-external-start}§
UI.vacationDialog.onConfirm{a => add_vacation.apply(a)} §\label{line:calendar-callback}§

invariant forall a: Appointment :: §\label{line:calendar-invariants-one}§
  a in all_appointments ==> get_start(a) < get_end(a)

invariant remaining_vacation >= 0 §\label{line:calendar-invariants-two}§
\end{lstlisting}

\begin{figure}
\centering
\begin{tikzpicture}[yscale=3, xscale=4]
\graph[no placement,
        nodes={align=center}] %
{ 
    vacation/"\textbf{vacation}\\\texttt{AWSet[Appointment]}"[x=0, y=0, DISTSOURCE] ->[FARROW]
    remaining/"\textbf{remaining\_vacation}\\\texttt{30 - size(vacation)}"[x=0, y=-0.5, MAP];
    {vacation,
    work/"\textbf{work}\\\texttt{AWSet[Appointment]}"[x=1, y=0, DISTSOURCE]}
    ->[FARROW]
    all/"\textbf{all\_appointments}\\\texttt{union(vacation, work)}"[x=1,y=-0.5, MAP]
};
\end{tikzpicture}
\caption{The data-flow graph of the calendar application.}
\label{fig:calendar-graph}
\end{figure}

\subsection{Reactives}
\label{sec:example-application}

\emph{Reactives} are the composition units in a \langname{} program.
We distinguish two types of them: \emph{source} and \emph{derived} reactives, declared by the keywords \lstinline{Source} and \lstinline{Derived}, respectively. Source reactives are values that are directly changed through interactions.
Their state is modeled as \emph{conflict-free replicated data types} (CRDTs)~\cite{Shapiro2011, preguica2018} and is replicated between the different devices collaborating on the application. Derived reactives represent local values that are automatically computed by the system from the values of other reactives (source or derived). Changes to source reactives automatically (a) trigger updates of derived reactives and (b) cause devices to asynchronously send update messages to the other devices,
which then merge the changes into their local state.
Together, local propagations and asynchronous cross-device update messages ensure
that users always have a consistent view of the overall application state.
All reactives are statically declared in the program source code.
LoRe then statically extracts knowledge about the data flow for modular verification and to minimize the proof goals (cf. Section~\ref{step-2-graph-analysis}).
We discuss the technical implications of static reactives in Section~\ref{sec:chap:conclusion}.

Listing~\ref{lst:calendar} shows two source reactives,
\lstinline{work} and \lstinline{vacation} (Line~\ref{line:work} and~\ref{line:vacation}), each modeling a calendar as a set of appointments.
The work calendar tracks work meetings, while the vacation calendar contains registered vacation days.
When defining a source reactive, programmers have to choose a CRDT for the reactive's internal state.
\langname\ offers a selection of pre-defined CRDTs including various standard data types such as sets, counters, registers and lists.
Further data types can be supported by providing a Viper specification for that data type.
In this case, an \emph{add-wins-set} (a set CRDT where additions have precedence over concurrent deletions) is selected for both
source reactives.
Appointments from both calendars are tracked in the \lstinline{all_appointments} derived reactive (Line~\ref{line:all-appointments}), while the \lstinline{remaining_vacation} reactive (line \ref{line:remaining-vacation})  tracks the number of remaining vacation days.
The \emph{data-flow graph} of the application, where nodes are reactives and edges represent the derivation relation -- in the direction of data flow -- is visualized in Figure~\ref{fig:calendar-graph}. This data-flow graph is created by the \langname{} compiler and managed by its runtime.

\subsection{Interactions}
Changes to the state of the system, e.g., adding appointments to a calendar, happen through explicit \emph{interactions}.
Each interaction has two sets of type parameters:
the types of source reactives that it modifies and the types of parameters that are
provided when the interaction is applied.
For example, the \lstinline{add_appointment} interaction in Line~\ref{line:calendar-interaction} modifies a reactive of type \lstinline{Calendar} and takes a parameter of type \lstinline{Appointment}.
The semantics of an interaction \lstinline{I} are defined in four parts:
(1) \lstinline{requires} (Line~\ref{line:calendar-start-end-precondition}) defines the preconditions that must hold for
\lstinline{I}
to be executed,
(2) \lstinline{executes} (Line~\ref{line:calendar-add-executes}) defines the changes to source reactives,
(3) \lstinline{ensures} (Line~\ref{line:calendar-add-ensures}) defines the postconditions that must hold at the end of \lstinline{I}'s execution,
(4) \lstinline{modifies} (Line~\ref{line:specialize-interaction}) defines the source reactives that \lstinline{I} changes.
The parameters of \lstinline{requires}, \lstinline{executes}, and \lstinline{ensures} are functions that take the modified reactives and the interaction parameters as input (\lstinline{cal} is of type \lstinline{Calendar} and \lstinline{a} is of type \lstinline{Appointment}).
The splitting of the definition of interactions in four parts allows for modularization and reuse.
For instance, \lstinline{add_appointment} is only a partial specification of an interaction, missing the \lstinline{modifies} specification.
Both \lstinline{add_work} (Line~\ref{line:specialize-add-work}) and \lstinline{add_vacation} (Line~\ref{line:specialize-interaction}) specify  complete interactions by adding \lstinline{modifies} to \lstinline{add_appointment}; they are independent interactions that
differ only in their modifies set.

Interactions encapsulate reactions to input from
the outside world (e.g., the callback in Line~\ref{line:calendar-callback} that is triggered by the UI and applies the arguments to \lstinline{add_vacation}).
Applying an interaction checks the preconditions, and -- if they are fulfilled -- computes and applies the
changes to the source reactives, and propagates them to derived reactives --
all in a \enquote{transactional} way in the sense that all changes to affected
reactives become observable at-once (\enquote{atomically}).
Only source reactives are replicated between devices, while derived
reactives are computed by each device individually.
\langname{} gurantees that executing interactions does not invalidate
neither postconditions nor invariants.

\begin{figure*}
\begin{footnotesize}
\ttfamily
\centering
\def\svgwidth{0.7\textwidth}
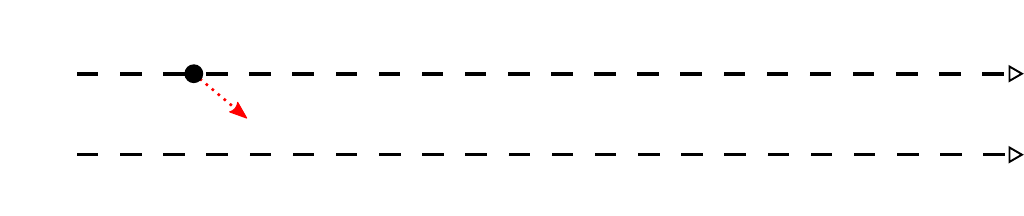
\caption{Concurrent execution of interactions may cause invariant violations.
In this example, device $D_1$ adds a vacation of 20 days to the calendar, while $D_2$ concurrently adds a vacation of 12 days.
Given a total amount of 30 available vacation days, this leads to a negative amount of remaining vacation once the devices synchronize.}
\label{fig:calendar-anomaly}
\end{footnotesize}
\end{figure*}

\subsection{Invariants and Conflicts}
\label{sec:synchronization-points}

\langname{}
 expects the developer to use \emph{invariants}, introduced with the keyword \lstinline{invariant},
to specify application properties that should always hold.
Invariants are first-order logic assertions given to a verifier based on the Viper verification infrastructure~\cite{MuellerSchwerhoffSummers16}.
Invariants can help uncover programming bugs and reveal where the eventually-consistent replication based on CRDTs could lead to safety problems.

For illustration, consider the invariants for the calendar application in Lines~\ref{line:calendar-invariants-one} and \ref{line:calendar-invariants-two}.
The invariant in Line~\ref{line:calendar-invariants-one}
requires that all appointments must start before they end.
Notice, how the invariant can be defined without knowing the amount of calendars and the actual structure of the data-flow graph by simply referring to the \lstinline{all_appointments} reactive.
This invariant represents a form of input validation, and is directly ensured by \lstinline{add_appointment} interactions because the precondition on the arguments requires the added appointment to start before it ends (Line~\ref{line:calendar-start-end-precondition}).
In absence of this precondition, the \langname\ compiler would reject the
program and report a safety error due to a possible invariant violation.
The invariant in Line~\ref{line:calendar-invariants-two} requires that employees do not take more vacation days than available to them.
Again, this is locally enforced by the precondition of the \lstinline{add_vacation} interaction, which ensures that new entries do not exceed the remaining vacation days. But there is nothing stopping two devices from
concurrently adding vacation entries, which in sum violates the invariant.
Figure~\ref{fig:calendar-anomaly} illustrates such a situation:
A user plans a vacation of 20 days on the mobile phone (device \(D_1\)) and later
schedules a 12-day vacation on a desktop (device \(D_2\)), at a time
when $D_1$ was offline. Thus, both interactions happened concurrently
and after merging the states the calendar contains a total of 32
days of vacation, violating the \lstinline{remaining_vacation} invariant.

This example illustrates a conflict between concurrent interactions that can potentially violate an invariant.
In this case, two executions of \lstinline|add_vacation| must not happen concurrently.
A simple remedy for preventing conflicts and the resulting invariant violations is coordinating (synchronizing) the involved interactions.
In the given example, this would mean that two devices could never add vacation entries concurrently, because every execution of \lstinline|add_vacation| would need to be coordinated with the other devices.

\langname{}'s verifying compiler reports conflicting interactions to the developer and automatically generates the required coordination code for the execution of such interactions (see Section~\ref{sec:synchronization}).
The generated coordination code employs a token-based approach~\cite{Gotsman2016} which ensures that devices can only execute conflicting interactions iff they hold the tokens for the interaction itself \emph{and} all its conflicts (in this case, they would only need the \lstinline|add_vacation| token).
Every token can only ever be held by one device at a time, effectively preventing every other device from executing conflicting interactions until they are able to coordinate with the token-holding device and can obtain the required token(s).

This approach ensures program safety but can be at odds with one of the principles behind local-first software: offline-availability.
Usually, adding stronger guarantees will lead to more synchronization, ultimately resulting in an availability/safety trade-off.
In practice, one can improve the situation by employing more sophisticated token distribution algorithms and timeouts\footnote{One example would be assigning a primary device which manages the tokens.
This could mean, e.g., that the user's laptop is always able to perform all interactions (even when offline) while the user's phone needs to be able to contact the laptop for certain conflicting interactions.}
but this still blocks certain devices from performing certain actions at a given time.
LoRe's conflict reporting helps developers to explore
these trade-offs and allows them to make informed decisions about the safety guarantees of their program.
When they find that their program requires too much synchronization,
they can lower the guarantees by adapting their invariants.

\section{Programming Model}
\label{sec:programming-model}

This section formally presents the syntax and semantics of the programming model and discusses how we verify that execution of a program preserves safety guarantees specified by invariants.
The definition of the program execution is split into a big-step semantics for handling reactive updates on a single device and a labelled transition system to model execution of the overall distributed system.
\langname{} guarantees that given a valid program state (that ensures the safety invariants),
any step taken in the labelled transition system preserves the validity of the program.
To preserve safety without sequentializing the whole distributed program execution,
the formal semantics relies on an oracle that tells us when two interactions have a conflict.
We then give a proof of safety preservation given our oracle.
Using the insights of this proof, Section~\ref{sec:chap:implementation} shows
how we use the verification infrastructure to compute this oracle.

\newcommand{\keyword}[1]{\text{#1}}
\newcommand{\transition}[2]{\xRightarrow[#1]{#2}}
\newcommand{\srcset}[0]{\sigma}
\newcommand{\derset}[0]{\delta}
\newcommand{\invset}[0]{\mathit{I}}
\newcommand{\intvar}[0]{a}
\newcommand{\lockvar}{L}
\newcommand{\valid}[1]{\mathit{valid}(#1)}
\newcommand{\preserving}[1]{\mathit{preserving}(#1)}
\newcommand{\conflicts}[1]{\mathit{conflicts}(#1)}
\newcommand{\confluent}[1]{\mathit{confluent}(#1)}
\newcommand{\eval}[3]{#1 \Downarrow_{#2} #3}
\newcommand{\intset}[0]{\mathit{A}}
\newcommand{\transset}[0]{\omega}
\newcommand{\device}[2]{\langle \srcset{}{#1}, L{#2} \rangle}
\newcommand{\bootstrap}[4]{\intset{#1}, \invset{#2}, \srcset{#3}, \derset{#4}}
\newcommand{\devicetrans}[4][l]{\langle\srcset{#2}; \derset{#3}; \transset{#4}; #1\rangle\ }
\newcommand{\invariants}[0]{\textit{Invariants}}
\newcommand{\interactions}[0]{\textit{Interactions}}
\newcommand{\syncpoints}[0]{\textit{Syn}}
\newcommand{\dstep}[2]{(D_1 \mid \ldots \mid D_i \mid \ldots \mid D_n) \xRightarrow[#1]{#2} (D_1 \mid \ldots \mid D_i' \mid \ldots \mid D_n)}

\begin{figure}
\centering
\begin{gather}
\begin{align*}
P ::=         & \ (A, \derset{}, I, (D \mid \ldots \mid D))                     & \mathit{Program}\\
D ::=         & \ \device{}{}                                                   & \mathit{Device} \\
A ::=         & \ \{\keyword{Interaction}((r, \ldots, r), l, l, t)\}            & \mathit{Interactions} \\
\srcset{} ::= & \ \{\keyword{val}\ r = \keyword{Source}(v)\}                    & \mathit{Sources} \\
\derset{} ::= & \ \{\keyword{val}\ r = \keyword{Derived}(t)\}                   & \mathit{Derived} \\
I ::=         & \ \{\keyword{Invariant}(l)\}                                    & \mathit{Invariants} \\
v ::=         & \ r \mid \keyword{true} \mid \keyword{false} \mid x => t        & \textit{Value} \\
t ::=         & \ v \mid t\ t \mid x \mid                                       & \lambda\ \mathit{Term} \\
              & \ r.\keyword{value}                                             & \mathit{Reactive\ Access} \\
l ::=         & \ t \mid x => l \mid \neg l \mid l == l \mid l \wedge l \mid l \vee l \mid & \mathit{Logic\ Term} \\
              & \ \keyword{forall}\ x.\ l \mid \keyword{exists}\ x.\ l          & \textit{Quantifier}
\end{align*}
\end{gather}
\begin{minipage}{0.48\linewidth}
\begin{align*}
x  \qquad                                                &\mathit{Variables}\\
r \qquad                                                &\mathit{Reactive\ Identifiers}\\
L  \subseteq   \intset \qquad                                   &\mathit{Locks}
\end{align*}
\end{minipage}
\caption{Abstract syntax of \langname\ programs.}
\label{fig:lore-syntax}
\label{fig:term-syntax}
\end{figure}

\subsection{Syntax and Evaluation Semantics}

\subsubsection{Syntax}
Figure~\ref{fig:lore-syntax} shows the abstract syntax for \langname{}.
A distributed program $P$ is defined as a tuple, whose elements are
a set of interactions $A$,
a set of derived reactives $\derset{}$, a set of invariants $\invset$, and a list of
devices $(D_1 \mid \dots \mid D_n)$, using $\mid$ as a separator in reference to parallel composition.
We write $D \in P$ to state that $D$ is one of the
devices in $P$,
where a device $D = \device{}{}$ consists of an assignment of source reactives $\srcset{}$ and a set of locks $L$.
For the following definitions, we use curly braces as part of the meta syntax to denote that an expression occurs zero or more times in any order.
This is used for top-level definitions in \langname{},
and we treat such expressions as having set semantics.

Every $\keyword{Interaction}((r_1, \ldots, r_n), l_{pre}, l_{post}, t_{exec})$ is composed of a
list
of affected reactives $(r_1, \ldots, r_n)$, pre- and postconditions $l_{pre}$ and $l_{post}$,
and the interaction body $t_{exec}$.
We introduce an inner term language $t$ for the bodies of reactives and interactions, which is a simple lambda-calculus extended
with access to reactives. We write $v = \srcset{}(r)$ to refer to the current value of a
source reactive, if the expression $\keyword{val}\ r = \keyword{Source}(v)$ is present in $\srcset{}$,
and
$t = \derset{}(r)$ to refer to the body of a derived reactive if the expression $\keyword{val}\ r = \keyword{Derived}(t)$ is present in $\derset{}$.
We use a first-order logic language $l$, which embeds $\lambda$-terms,
for defining invariants, pre-, and postconditions.

$\intset{}$, $\derset{}$, and $\invset{}$ are static and only devices $D = \device{}{}$ may change during execution by updating the values of source reactives $\srcset{}$ or their currently held locks $L$.
Semantically, each interaction $\intvar{} \in \intset{}$ has a one-to-one correspondence to a lock,
thus we syntactically represent locks the same as interactions with $L \subseteq A$.

\subsubsection{Term evaluation}
We use a big-step semantics for term evaluation.
Figure~\ref{fig:term-semantics} shows the rules for reactive evaluation -- we omit rules related to standard lambda-calculus and logic evaluation.
We write $\eval{t}{\srcset{}}{v}$ to state that $t$ evaluates to $v$ given a current assignment of source reactives $\srcset{}$.
Note that evaluation depends on $\derset{}$, but we omit writing it in the rules as it is fixed for each program and thus always clear from context.
Rule \rulefont{ValueSource} retrieves the current value of a source reactive from the store $\srcset{}$.
Rule \rulefont{ValueDerived} evaluates the expression of a derived reactive, which may depend on other reactives,
thus the rule potentially results in evaluating many derived and source reactives in the data-flow graph.
\begin{figure}
\begin{mathpar}
\infer[ValueSource]
{\eval{t}{\srcset{}}{r} \and v = \srcset{}(r)}
{\eval{t.\keyword{value}}{\srcset{}}{v}}
\and
\infer[ValueDerived]
{\eval{t}{\srcset{}}{r} \and t_d = \derset(r) \and \eval{t_d}{\srcset{}}{v}}
{\eval{t.\keyword{value}}{\srcset{}}{v}}
\end{mathpar}
\caption{Semantics for reactive term evaluation.}
\label{fig:term-semantics}
\end{figure}

\subsubsection{Semantics for interactions and device communication}

\begin{figure}
\centering
\begin{mathpar}
\infer*[left=Interact]
{
\intvar = \keyword{Interaction}((r_1, \ldots, r_n), l_{pre}, l_{post}, t_{exec})
\and
\intvar \in \intset
\\\\
\mathit{conflicts}(\intvar) \subseteq L
\and
\eval{(l_{pre}\ v_{arg})}{\srcset{}}{\keyword{true}} \and \eval{(t_{exec}\ v_{arg})}{\srcset{}}{(v_1, \dots, v_n)}
\\\\
\eval{(l_{post}\ v_{arg})}{\srcset{}'}{\keyword{true}} \and \srcset' = \mathit{update}(\srcset, (r_1, \ldots, r_n), (v_1, \dots, v_n))}
{(D_1 \mid \ldots \mid \device{}{} \mid \ldots \mid D_n) \xRightarrow[a]{v_{arg}} (D_1 \mid \ldots \mid \device{'}{} \mid \ldots \mid D_n)}
\and
\infer*[left=Sync]
{ \lockvar \subseteq L_s \and L_s' = L_s \setminus \lockvar \and L_r' = L_r \cup \lockvar \\\\
\srcset{}_r' = \textit{merge}(\srcset{}_r, \srcset{}_s)}
{(D_1 \mid \ldots \mid \device{_s}{_s} \mid \ldots \mid \device{_r}{_r} \mid \ldots \mid D_n) \xRightarrow[sync]{\lockvar} \\ (D_1 \mid \ldots \mid \device{_s}{_s'} \mid \ldots \mid \device{_r'}{_r'} \mid \ldots \mid D_n)}
\end{mathpar}
\begin{minipage}{.9\textwidth}
\begin{alignat*}{1}
\mathit{update}(\srcset, (r_1, \ldots, r_n), (v_1, \ldots, v_n))(r) &= \begin{cases}
    \mathit{merge}_r(\srcset(r), v_i) & \text{if } r = r_i\\
    \srcset(r) & \text{otherwise}
\end{cases}
\\
\mathit{merge}(\srcset_1, \srcset_2)(r) &= \mathit{merge}_r(\srcset_1(r), \srcset_2(r))
\\
\mathit{conflicts}&: \intset \rightarrow \mathbb{P}(\intset)
\end{alignat*}
\end{minipage}
\caption{Semantics for interactions and device communication.}
\label{fig:comm-semantics}
\end{figure}

Figure~\ref{fig:comm-semantics} presents the semantics for interactions
and device communication together with auxiliary functions.
We use three auxiliary functions
\emph{update}, \emph{merge} and \textit{conflicts}.
Function \emph{update} takes a set of source reactives $\srcset$, a tuple of reactive identifiers, a tuple of values and returns a new set with updated values.
The value of each source reactive is a CRDT -- we use a state-based model in our formalization:
When a source reactive is updated, the new value is computed by merging the update into the current value. The function $\mathit{merge}_r$ is defined by the CRDT stored in $r$.

Function \emph{merge} takes two stores $\srcset_1$ and $\srcset_2$ and merges them
by pair-wise merging the
values of each source reactive through $\mathit{merge}_r$.
The \emph{update} and \emph{merge} functions on stores are commutative, associative and idempotent,
because $\mathit{merge}_r$ also has these properties by the definition of CRDTs.
Notably, this implies a partial order on states where $\srcset \leq \mathit{update}(\srcset,(r_1,\ldots,r_n), (v_1,\ldots,v_n))$ and $\srcset_1 \le \srcset_2$ if $\mathit{merge}(\srcset_1, \srcset_2) = \srcset_2$. We will use this order to reason about cases where $\srcset_2$ includes all changes of $\srcset_1$.

For syntactic convenience, we lift the order to devices $D_{1/2} = \device{_{1/2}}{_{1/2}}$, where we write $D_1 \le D_2$ if $\srcset{}_1 \le \srcset{}_2$.

The \textit{conflicts} function takes an interaction and returns the set of interactions that conflict with it.
If an interaction $a_1$ has no conflicts, then $\mathit{conflicts}(a_1) = \emptyset$, if there is an interaction $a_2$ that conflicts with $a_1$,  then $\{\intvar_1, \intvar_2\} \subseteq \mathit{conflicts}(\intvar_1)$ and $\{\intvar_1, \intvar_2\} \subseteq \mathit{conflicts}(\intvar_2)$.
This ensures that whenever a device wants to execute an interaction with conflicts, it has to hold the locks for the interaction itself and for all the conflicting interactions.
The \textit{conflicts} function serves as an oracle to prevent concurrent execution of interactions.
Semantically, it defines the synchronization requirements of the program.
Specifying an empty conflict set for
every interaction, induces no synchronization, thus providing only causal consistency,
while specifying an interaction as conflicting with every other interaction yields
sequential consistency, disallowing any concurrent executions.
We show in Section~\ref{sec:chap:implementation} how to compute a conflict function
that prevents only those concurrent executions that would result in invariant violations on merged state.

We use a labelled transition system to model program execution,
where transitions are labelled by interactions or synchronizations, which both  occur  non-deterministically.
The \rulefont{Interact} rule defines the semantics for applying an interaction $\intvar$ with an argument $v_{arg}$.
The definition assumes that $l_{post}$, $l_{pre}$, and $t_{exec}$ are functions;
otherwise $\intvar$ is ill-defined and cannot be executed.
The execution of
$\intvar$ moves a device from state $\device{}{}$ to state $\device{'}{}$ if:
    i) the precondition $l_{pre}$ applied to
    $v_{arg}$  evaluates to $\keyword{true}$ before the execution,
    ii) the device executing $\intvar$ holds the locks of all $\mathit{conflicts(\intvar)}$,
    iii) the postcondition $l_{post}$ applied to $v_{arg}$ evaluates to $\keyword{true}$ after the execution,
    iv) evaluating $t_{exec}$ returns a tuple\footnote{Using a suitable encoding of tuples in the lambda calculus.} of new values $(v_1, \ldots, v_n)$ for each source reactive $(r_1, \ldots, r_n)$, which are used to update the store of source reactives $\srcset{}$.

The rule \rulefont{Sync} models communication between devices.
Synchronization happens between a sending device $D_s = \device{_s}{_s}$ and
a receiving device $D_r = \device{_r}{_r}$, where the former transfers state and locks to the latter.
A set of locks $\lockvar \subseteq L_s$ is removed from $D_s$ and added to the locks of $D_r$.
The state $\srcset{}_r$ is merged with the sent state $\srcset{}_s$.
By combining state updates and lock exchange in a single transition,
we ensure that for every interaction $\intvar$ that needs coordination,
a device $D_r$, which receives the locks for $\intvar$,
also receives the effects of the last application of $\intvar$.
Moreover, merging the state of all source reactives in a single step ensures
causal consistency for the entire state of the application,
not only for single reactives (i.e., single CRDTs).

An interaction only changes a single device, hence, we can abbreviate applications
of \rulefont{Interact} to $D \transition{a}{v_{arg}} D'$ to express that a device $D$ executed an interaction regardless of the state of the other devices.
Similarly, we write $D_s \transition{\mathit{sync}}{D_r} D_r'$ to express that $D_s$ was synchronized into $D_r$ producing $D_r'$
(and $D_s'$, which we do not need to state explicitly, as it is fully defined by the other 3 devices).

\subsection{Verifying Program Safety}
\langname{} guarantees that program execution is safe:
A program in a state where all safety invariants hold, transitions only into states where the invariants still hold.
A given program state that satisfies all safety invariants is called valid.
Formally:

\begin{definition}[Validity]
\label{def:validity}
Given
$P$ with invariants $\invset{}$ and devices $D_1, \dots, D_n$, we say that
$P$ is valid, written $\valid{P}$, if
$\valid{D_i}, i \in \{1, \dots, n\}$.
A device
$D = \device{}{}$ is \emph{valid} -- written $\valid{D}$ -- if for any
$\keyword{Invariant}(l) \in \invset{}$, $\eval{l}{\srcset{}}{\keyword{true}}$.
\end{definition}
\begin{definition}[Safety] \label{def:safety}
A program $P$ is \emph{safe}, if for any possible transition $P \Rightarrow P'$,
$\valid{P} \Rightarrow \valid{P'}$.
\end{definition}

To enable automatic and modular verification,
we break safety checking into two simpler properties on
invariants that can be automatically checked by Viper:
\emph{invariant preservation} and \emph{confluence}.
Invariant preservation ensures the safety of individual interactions,
whereas confluence (adapted from \authorcite[Bailis et al.]{bailis2014} and other works that build on the CALM theorem \cite{Alvaro2014, Hellerstein2019}) relates two interaction executions (of the same or different interactions).
We show how to mechanically prove these properties
using Viper in Section~\ref{sec:chap:implementation}.
The rest of this section formally introduces all mentioned properties and proves soundness of our approach
by showing that safety preservation follows from invariant preservation of all interactions,
and confluence of all interactions that are not marked as conflicting.

\begin{definition}[Invariant Preservation] \label{def:invariant-preservation}
An interaction $a$ is \emph{invariant preserving}, written $\preserving{a}$, if $(\valid{D} \wedge D \xRightarrow[a]{v_{arg}} D') \Rightarrow \valid{D'}$, i.e., given a valid device $D$ and some argument $v_{arg}$, the execution of $a$ in $D$ produces a valid device $D'$.
\end{definition}

\begin{definition}[Confluence] \label{def:invariant-confluence}
Interactions \(a_1\) and \(a_2\) are
\emph{confluent}, written $\confluent{a_1, a_2}$,
 if for any valid devices $D_i = \device{_i}{_i}$ and $D_j = \device{_j}{_j}$, and any argument values $v_1$ and $v_2$

\begin{alignat*}{1}
&D_i \xRightarrow[a_1]{v_1} \device{_i'}{_i} \wedge D_j \xRightarrow[a_2]{v_2} \device{_j'}{_j} \ \wedge \\
&D_{m_1} = \langle \mathit{merge}(\srcset_i, \srcset_j'), L_i \cup L_j \rangle \  \wedge \\
&D_{m_2} = \langle \mathit{merge}(\srcset_i', \srcset_j), L_i \cup L_j \rangle \  \wedge \\
&\mathit{merge}(\srcset_i', \srcset_j') = \srcset' \ \wedge \ L_i \cap L_j = \emptyset \\
\implies
&D_{m_1} \xRightarrow[a_1]{v_1} \langle \srcset{}', L_i \cup L_j \rangle \wedge\ \\
&D_{m_2} \xRightarrow[a_2]{v_2} \langle \srcset{}', L_i \cup L_j \rangle
\end{alignat*}

\end{definition}

Confluence states that applying interactions $a_1, a_2$ on two devices leads to the same results, no matter if a synchronization happens in-between or after the interactions.
Thus, concurrent execution of the two interactions leads to the same result as sequential execution.
Note that $a_1$ and $a_2$ are always trivially confluent if they affect disjoint subsets of source reactives and $a_1$ can never invalidate $a_2$'s precondition and vice versa.
In this case, sequential execution of the two interactions (in any order) always has the same result as concurrent execution.
We leverage this insight to reduce the amount of confluence proofs generated by our implementation in Section~\ref{step-2-graph-analysis}.

Given the definitions of invariant preservation and confluence, we now move on to prove soundness of \langname's execution model.

\begin{definition}[Initial Program]
The initial program $P^0$ has devices $(D^0_1 \mid \dots \mid D^0_n)$
that all have the same initial state of source reactives and all locks are with device $D^0_1$, i.e.,
$D^0_1 = \langle \srcset{}, \intset{} \rangle \wedge D^0_i = \langle \srcset{}, \emptyset{} \rangle, \forall i \in \{2, \ldots, n\}$.
\end{definition}

\begin{lemma}[Correct locking] \label{lemma-correct-locks}
The locking mechanism ensures for any program execution $P^0 \Longrightarrow \dots \Longrightarrow P^m$ that conflicting interactions are sequentially ordered.
Specifically, for any two conflicting interactions $a_1$, $a_2$ with transitions $\device{_1}{_1} \transition{a_1}{v_1} \device{'_1}{_1}$ and $\device{_2}{_2} \transition{a_2}{v_2} \device{'_2}{_2}$, and $\mathit{conflicts}(a_1) \cap \mathit{conflicts}(a_2) \ne \emptyset$,
the starting state of either $a_1$ or $a_2$ must include all changes produced by the other:  $\srcset{}'_2 \le \srcset{}_1$ or $\srcset{}'_1 \le \srcset{}_2$.
\end{lemma}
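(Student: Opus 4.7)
The plan is to prove the lemma by establishing a stronger invariant that couples lock ownership with state content throughout the execution, then to derive the stated conclusion as an easy consequence. The motivating observation is that \rulefont{Sync} deliberately bundles the transfer of locks with the merging of state, so any device that acquires a lock automatically inherits the sender's full state; this coupling is exactly what should enforce sequential ordering on conflicting interactions.

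Concretely, I would introduce the following invariant and prove it holds along every trace $P^0 \Longrightarrow \cdots \Longrightarrow P^m$: for every lock $\ell \in \intset$, at most one device holds $\ell$ at any given moment, and if $D = \device{}{}$ is that device, then for every interaction $\intvar$ already executed in the trace with $\ell \in \conflicts{\intvar}$ and post-state $\srcset{}'_\intvar$, we have $\srcset{}'_\intvar \le \srcset{}$. Uniqueness of ownership follows immediately since \rulefont{Sync} removes locks from the sender exactly when it adds them to the receiver, and the base case at $P^0$ is vacuous because $D^0_1$ holds all locks and no interaction has run yet.

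The inductive step splits on the last transition. For \rulefont{Interact} firing $\intvar$ on $D$: the guard $\conflicts{\intvar} \subseteq L$ forces $D$ to hold every relevant lock, and the new state $\srcset{}' = \mathit{update}(\srcset{}, (r_1,\dots,r_n), (v_1,\dots,v_n))$ satisfies $\srcset{} \le \srcset{}'$ by monotonicity of $\mathit{merge}_r$, so the newly recorded post-state $\srcset{}'$ trivially sits below itself on $D$, while all previously recorded post-states associated with locks still on $D$ remain below $\srcset{}'$ by transitivity. For \rulefont{Sync} moving $\lockvar \subseteq L_s$ from $D_s$ to $D_r$: locks that stay at $D_s$ are unaffected since $\srcset{}_s$ is unchanged; locks transferred to $D_r$ inherit all prior post-states which were $\le \srcset{}_s$ by the IH, and $\srcset{}_r' = \mathit{merge}(\srcset{}_r, \srcset{}_s) \ge \srcset{}_s$, so they remain bounded by the new state; locks already at $D_r$ only see $\srcset{}_r$ grow, so the invariant carries over by monotonicity.

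From the invariant, the lemma follows directly. Given conflicting $\intvar_1, \intvar_2$ with some $\ell \in \conflicts{\intvar_1} \cap \conflicts{\intvar_2}$, consider their positions $i, j$ in the trace. WLOG $i < j$; then at step $j$ the device executing $\intvar_2$ holds $\ell$ (by the Interact guard), and by the invariant its starting state $\srcset{}_2$ dominates the post-state $\srcset{}'_1$ of the earlier execution of $\intvar_1$, giving $\srcset{}'_1 \le \srcset{}_2$. The symmetric case yields $\srcset{}'_2 \le \srcset{}_1$. The main obstacle, and the whole reason the invariant is needed, is that locks may migrate arbitrarily between devices via a long chain of \rulefont{Sync} steps between the two interactions of interest; a naive argument comparing only the two endpoint devices breaks down, so one has to track the lock-state coupling globally and appeal to transitivity of $\le$ along the chain, which is exactly what the inductive invariant does.
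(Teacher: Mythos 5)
Your proof is correct and follows essentially the same route as the paper's: establish by induction that each lock is held by exactly one device, then exploit monotonicity of the store under \rulefont{Interact} and \rulefont{Sync} so that whichever device currently holds a lock dominates the post-states of all earlier interactions that required it. Your explicit inductive invariant coupling lock ownership to those post-states is just a more careful packaging of the chain-of-transfers argument the paper states informally.
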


\begin{proof}
We first show by induction, that any program state that is reachable from the initial program $P_0$ assigns each lock to exactly one device.
This is true for the initial program, interact transitions do not modify the lock assignment, and the sync rule removes the same set of locks from the sending devices that are added to the receiving device.

Then, we prove that non-overlapping locks ensure Lemma~\ref{lemma-correct-locks}.
Conflicting interactions require the same lock $l \in \mathit{conflicts}(a_1) \cap \mathit{conflicts}(a_2)$.
This lock must be present on both devices executing the conflicting interactions, i.e., $l \in L_1$ and $l \in L_2$.
Thus, the lock must be transferred from $D_1' = \device{'_1}{_1}$ to $D_2 = \device{_2}{_2}$ (or symmetrically from $D_2'$ to $D_1$), using any number of transitions.
By commutativity, associativity and idempotence of the $\mathit{merge}$ function, each \rulefont{sync} transition ensures that the receiving device $D_r$
has a state $\srcset{}_r'$ with $\srcset{}_s \le \srcset{}_r'$ where $\srcset{}_s$ denotes the state of the sending device.
Similarly, every \rulefont{interact} transition $\device{}{} \transition{a}{v} \device{'}{}$ ensures that $\srcset{} \leq \srcset{'}$.
This implies that either $\srcset{}_2' \le \srcset{}_1$ or $\srcset{}_1' \le \srcset{}_2$.
\end{proof}

\begin{theorem}[Soundness]
Given a
program $P$ with interactions $\intset{}$ and invariants $\invset{}$, if the verifier has shown that
(i) $\forall  a \in \intset{}$. $\preserving{a}$,
and (ii) $\forall  \intvar_1, \intvar_2 \in \intset$,
$\confluent{\intvar_1, \intvar_2}$ $\lor$ $\intvar_2 \in \mathit{conflicts}(\intvar_1)$,
then
$P$ is safe.
\end{theorem}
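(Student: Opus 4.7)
The plan is to prove soundness by induction on the length of an execution trace $P^0 \Rightarrow \ldots \Rightarrow P^m$, with a strengthened inductive hypothesis: for every device $D_i$ in the current program, the source store $\srcset{}_i$ can be produced by executing some sequential schedule of interactions on the initial store $\srcset{}^0$. Validity of each $D_i$ then follows from iterating hypothesis (i) along such a schedule, so soundness in the sense of Definition~\ref{def:safety} reduces to maintaining the strengthened invariant. The base case is immediate since $P^0$ has every device at the (valid) initial state, corresponding to the empty schedule.

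For the inductive step, the \rulefont{Interact} case is essentially direct: only one device $D$ changes into $D'$ via some $a$; the existing witness $\pi$ for $D$ extends to $\pi \cdot a$ for $D'$, and validity of $D'$ follows from (i). The real work lies in the \rulefont{Sync} case. The sending device $D_s$ only loses locks, so its witness carries over. For the receiving device we must exhibit a sequential schedule whose execution yields $\mathit{merge}(\srcset{}_r, \srcset{}_s)$. This is where Lemma~\ref{lemma-correct-locks} enters: any pair of interactions $a_i, a_j$ executed on $D_s$ and $D_r$ since their last mutual synchronization point must satisfy $a_j \notin \mathit{conflicts}(a_i)$, for otherwise the lemma would force one of their starting states to already include the other's effects, contradicting the actual execution history. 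By hypothesis (ii), every such pair is therefore confluent via $\confluent{a_i, a_j}$.

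The hard part will be lifting the pairwise confluence of Definition~\ref{def:invariant-confluence} to the two entire concurrent suffixes of $D_s$'s and $D_r$'s witness schedules. I would proceed by an inner induction on the number of interactions to be commuted, each time using confluence to push one interaction across a single interaction of the other schedule while maintaining that (a) each intermediate merged state is itself valid so that subsequent preconditions remain meaningful, and (b) the lock sets partition correctly so that the premises of Definition~\ref{def:invariant-confluence} are met at each swap. The output is a single sequential schedule whose execution yields $\mathit{merge}(\srcset{}_r, \srcset{}_s)$; repeated application of (i) along this schedule then gives validity of the merged receiving device, completing the induction and hence the theorem.
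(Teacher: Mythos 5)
Your overall strategy is the same as the paper's: reduce safety to the existence of a per-device serialization of interactions, use Lemma~\ref{lemma-correct-locks} to show that concurrently executed interactions cannot be conflicting and hence (by premise (ii)) must be confluent, and then use confluence to serialize. The difference is one of direction: you run a forward induction carrying a witness schedule for each device, whereas the paper deconstructs the actual trace $\mathcal{C}$ backwards, discarding or reordering its last transition case by case. Your formulation absorbs the paper's awkward three-way-merge case (its Case~4.4) into the general sync case, which is a genuine simplification in presentation.

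However, there is a gap in your strengthened inductive hypothesis as stated. Knowing only that each $\srcset{}_i$ is producible by \emph{some} sequential schedule is not enough to carry out the sync case: to apply confluence you must identify which interactions in $D_s$'s and $D_r$'s witnesses are \emph{concurrent} (i.e., not already reflected in both states) versus which form a common prefix, and an unstructured existential witness gives you no handle on this. Your appeal to ``the last mutual synchronization point'' quietly imports information from the actual execution history that your invariant does not record. To close this you would need to strengthen the hypothesis further, e.g., to maintain a global partial order (causal history) of interaction instances such that each device's witness is a linearization of its causal past, with syncs corresponding to taking unions of downward-closed sets; the paper avoids this bookkeeping by always working on the concrete trace. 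Separately, the crux you flag --- lifting pairwise confluence of Definition~\ref{def:invariant-confluence} to entire concurrent suffixes while preserving validity and lock disjointness at each intermediate merge --- is indeed where the real work lies; the paper only sketches this too (its Case~4.3 reorders one interaction at a time), so you are not behind the paper there, but your proof is not complete until that inner induction is actually carried out.
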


\begin{proof}
To prove soundness, we show that for any sequence $\mathcal{C}$
of transitions
$P^0 \Longrightarrow \dots \Longrightarrow P^m$,
$\valid{P^0}$ implies $\valid{P^m}$.
To show $\valid{P^m}$, we show that $\forall D^m_i \in P.\ \valid{D^m_i}$.
We show that an arbitrary $D^m \in P^m$
is valid,
by showing that
we can construct a serialization order $\mathcal{S}_i$
of interaction transitions $D^0_1 \xRightarrow[\intvar_1]{v_1} \ldots \xRightarrow[\intvar_l]{v_l} D'$
that starts from the initial device state with all locks, consists exclusively of
applications of the \rulefont{Interact} rule, and
yields a device $D' = \device{^m_i}{}$ with the same state of
source reactives $\srcset^m_i$ as
$D^m_i$.\footnote{Note that the length of $\mathcal{C}$ and $\mathcal{S}_i$ may differ, and $\mathcal{S}_i$ does not necessarily include all interactions of $\mathcal{C}$.}
Given such $\mathcal{S}_i$ and $\forall a \in \mathcal{S}_i.\ \preserving{a}$
(which follows from premise (i) of the Theorem),
we can conclude that $\valid{D'}$. In turn, $\valid{D'}$ implies $\valid{D^m_i}$
because validity only depends on $\srcset^m_i$.
In other words, we show that every possible device state that
is a result of interactions and synchronizations, could
also be constructed through a sequence of only local
interactions.
Since every interaction on itself is \emph{invariant preserving} (premise (i)), this implies that the program is safe.

\emph{Serial order construction.}
Let $D^m \in P^m$ be an arbitrary device
that we choose to serialize.
As a convention, we use $D$ (without the superscript) to refer to that device,
in any program state belonging to $\mathcal{C}$ or $\mathcal{S}$.
e.g., an interaction on $D$ would be a step of the form $D^k \xRightarrow[\intvar_1]{v} D^{k+1}$.
We construct the serial order $\mathcal{S}$ for $D$ stepwise from the concurrent order $\mathcal{C}$,
by picking transitions at the end of $\mathcal{C}$ and prepend them to $\mathcal{S}$.
We say that the last (rightmost) transition in $\mathcal{C}$ is the focused transition
$T$.
Below we consider all possible rule applications that trigger $T$ in a case-by-case way.

\textbf{Case 1:} $T$ is an interaction
$D \transition{a}{v} D'$.
We prepend $T$ to $\mathcal{S}$ and discard it from $\mathcal{C}$.

\textbf{Case 2:} $T$ is an interaction $D_i \transition{a}{v} D_i'$, $D_i \ne D$.
We discard $T$ from $\mathcal{C}$, because by being the last transition in $\mathcal{C}$,
it does not affect $D$.

\textbf{Case 3:} $T$ is a synchronization $D_s \transition{\mathit{sync}}{D_r} D_r' $ from any $D_s$ (possibly $D$) to $D_r \ne D$.
We discard $T$ from $\mathcal{C}$ because
we know that $D_r'$ will not synchronize with $D$ after $T$.

\textbf{Case 4:} $T$ is a synchronization $D_i \transition{\mathit{sync}}{D} D'$ from any $D_i \ne D$ to $D$.
To handle this case, we consider possible states of $D_i$ and $D$ before $T$ occurred,
especially whether devices had concurrent changes since their last synchronization:

\textbf{Case 4.1:} $D_i \le D$, i.e., there are no changes to $D_i$ compared to $D$.
We discard $T$, because it only transfers some lock, and locks are irrelevant for the final serialization order.

\textbf{Case 4.2:} $D \le D_i$, i.e., there are no changes to $D$ compared to $D_i$.
Similar to case 4.1, but now all relevant changes are on $D_i$.
Because $T$ is a sync, we know that the state of $D$ after $T$ is equal to $D_i$ (except locks).
We discard $T$ and continue the construction with $D_i$ as the chosen device.

\textbf{Case 4.3:}
Both $D$ and $D_i$ have concurrent changes
and the transition that produced the state of $D$ preceding $T$ is the application of an interaction $a$.
We know that any interaction $a$ is either confluent to any other interaction (including itself) or the other interaction is included in $\mathit{conflicts}(a)$ (premise (ii) of the Theorem).
Given Lemma~\ref{lemma-correct-locks}, this implies
that all interactions on $D_i$ that are concurrent to $a$ are confluent with $a$.
Thus, we can use the confluence definition to reorder $\mathcal{C}$ to put the application of $a$
at the rightmost end of $\mathcal{C}$, directly after the synchronization $T$. After the reordering,
we handle $a$ according to case 1.

\textbf{Case 4.4:}
Both $D$ and $D_i$ have concurrent changes and the state of $D$
was produced by a synchronization $T_k$, $D_k \transition{\mathit{sync}}{D'} D$, that synchronizes a third device $D_k \ne D_i$ into $D$.
In other words, the effect of the last two transitions is a merge between $D_k$, $D_i$, and $D$ and we (potentially) have concurrent changes from all three for which we must find a single serialization order.
On a high-level, we (arbitrarily) choose to order concurrent interactions on $D$ first, $D_i$ second, and $D_k$ third by changing the transitions to first synchronize $D_k$ into $D_i$ and then $D_i$ into $D$.
In other words, we change $T_k$ (the sync from $D_k$ to $D$) in $\mathcal{C}$ to $T'_k = D_k \transition{\mathit{sync}}{D'} D_i$.%
\footnote{This disregards changes in transferred locks, but at this point those are irrelevant for the serialization order.}
This does not change the result state of $D$ -- it is still a result of merging the states of $D$, $D_i$, and $D_k$, where the change in merge order does not matter because merging is associative and commutative.

\emph{Closing remarks.}
Our construction terminates, because (a) there is a finite amount of transitions in $C$
and (b) each case above, except 4.4, reduces the size of $C$. Case 4.4 does not reduce the size
of $\mathcal{C}$, but it is impossible to indefinitely repeat it, as this would entail that there are
indefinitely many synchronizations from $D_k$ into $D$, which is impossible as $\mathcal{C}$ is assumed to be finite.

Also, the construction
handles all situations, because cases 1 and 2 cover all possible applications of interactions,
while cases 3 and 4 cover all applications of synchronizations -- these are the only two rules that produce transitions.
We know that the distinction in case 4 is complete, because cases 4.1 and 4.2 handle all situations where there are no concurrent changes\footnote{Technically, both devices could apply confluent interactions that lead to the same result state -- these are irrelevant due to idempotence of merges and thus can be discarded from the serialization.} on one of the devices.
If there are concurrent changes, then cases 4.3 and 4.4 again exhaustively cover that situation.
In particular, the one exclusion from case 4.4, specifically the situation that $D_k = D_i$ is covered by case 4.2, because then $D_i$ is synchronized into $D$ twice, without any changes to $D$ in between.

In summary, by successively replacing and discarding transitions according to the cases above, we can generate a sequential order for any device $D_i^m \in P^m$, from which follows (using invariant preservation) that $\valid{D_i^m}$.
As we can construct such a serialization (which must not be the same) for all devices, we know that $\valid{P^m}$.

\end{proof}

\section{Implementation}
\label{sec:chap:implementation}

\begin{figure*}
\centering
\includegraphics[width=0.81\textwidth]{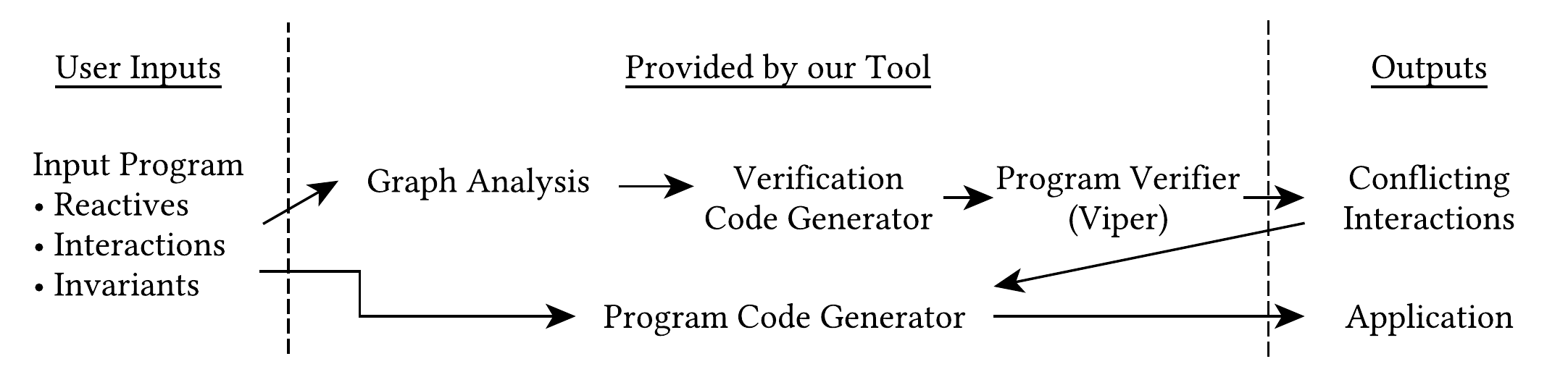}
\caption{Overview of \langname's automated compilation and verification procedure.}
\label{fig:procedure}
\end{figure*}

Figure~\ref{fig:procedure} depicts the architecture of \langname{}'s verifying compiler.
The input to the compiler is a program with its specifications expressed by the invariants,
e.g., the program in Listing~\ref{lst:calendar}.
The output consists of the conflicting interactions and a safe executable program.
To verify safety, we
prove invariant preservation (Definition~\ref{def:invariant-preservation}) for all interactions (a failed proof results in a compilation error) and try to prove invariant confluence (Definition~\ref{def:invariant-confluence}) for all pairs of interactions. We employ an analysis of the data-flow graph to minimize confluence
proof obligations  to those invariant pairs that may actually conflict.
Non-confluent interaction pairs are included in the conflict output.
We use Viper for the verification step, but any other verification could be used
and would still benefit from our minimization of the proof obligations.
The implementation uses REScala, CRDTs, and a token-based locking protocol for generating a
safe executable program. But they could be replaced by other implementations
of data-flow programming, eventual consistency,
and consensus with the same guarantees.
The rest of this section describes the pipeline from Figure~\ref{fig:procedure}
in detail -- from left to right, top to bottom.

\subsection{Graph Analysis}
\label{step-2-graph-analysis}

Checking all pairs of interactions for confluence would result in an exponential amount of proof obligations.
To avoid this, we employ a graph analysis to quickly detect pairs of interactions that cannot conflict, because they change completely separate parts of the data-flow graph.
The graph analysis algorithm (Figure~\ref{alg:dataflow}) checks every interaction to determine its reachable reactives:
The source reactives that are affected by the interaction together with all transitively derived reactives that depend on them (Line~\ref{line:reaches}).
Next, the algorithm determines the \emph{overlaps} between interactions and invariants (Line~\ref{line:overlaps}).
An interaction and an invariant \emph{overlap} if any reactive occurring in the invariant is part of the interaction’s reachable reactives.
Two interactions \emph{overlap} if there is at least one invariant they both overlap with.
Only overlapping interactions produce proof obligations.

\begin{algorithm}[t]
\small
\input{listings/dataflow_analysis.pseudo}
\caption{Pseudocode of the algorithm used to determine which invariants overlap with a transaction.}
\label{alg:dataflow}
\end{algorithm}

For illustration, consider
the \lstinline{add_work} interaction. It modifies the \lstinline{work} reactive, and -- transitively -- \lstinline{all_appointments}.
Hence, the reachable reactives are \(\{\mathit{work}, \mathit{all\_appointments}\}\) and only the first but not the second invariant in Listing~\ref{lst:calendar} overlaps. Thus, neither the \lstinline{remaining_vacation} reactive, nor the invariant on this reactive will be part of the proof obligation for the \lstinline{add_work} interaction.

\subsection{Automated Verification}
\label{step-3-verification}

\subsubsection{Translation to Viper's Intermediate Language.}
\lstinputlisting[float=*, language=Sil, escapechar={§}, caption={Viper representation of the graph and invariants of the calendar example.}, label=lst:viper-header]{listings/header.vpr}

Listing~\ref{lst:viper-header} illustrates how we represent the data-flow graph and the safety invariants of our calendar example (Listing~\ref{lst:calendar}) in Viper's intermediate verification language~\cite{MuellerSchwerhoffSummers16}.
We represent the data-flow graph as a mutable object with one field per source reactive (Line~\ref{line:graph-start}).
In Viper, objects are implicitly defined by which fields the program accesses.
Derived reactives (Line~\ref{line:graph-derived}) and invariants (Line~\ref{line:viper-invariants}) are expressed as Viper macros -- pure functions that describe the invariant or body of the reactive, and receive the reactives they depend on as function inputs.

Given these definitions, we synthesize one Viper \emph{method} per interaction, based on the formal evaluation semantics in Section~\ref{sec:programming-model}.
Viper verifies programs on a per-method basis where methods represent a sequential computation annotated with pre- and postconditions.
Listing~\ref{lst:viper-preservation} shows the Viper method for the \lstinline{add_work} interaction.
Pre- and postconditions of interactions are simply translated to pre- and postconditions of the Viper methods (Line~\ref{line:pres-preconditions} and \ref{line:pres-postconditions}) while the \lstinline{executes} part of each interaction is represented by the method body (Line~\ref{line:pres-body}).
Reading derived reactives is modeled by inlining the respective Viper macro, which results in our big-step evaluation semantics.
Additionally, we include every overlapping invariant as pre- and postconditions (Lines~\ref{line:pres-invariants1} and \ref{line:pres-invariants2}) so that the verifier can prove invariant preservation.
Evaluating invariants is expressed by nested application of the previously defined macros (Listing~\ref{lst:viper-header}) to the source reactives, corresponding to the propagation of values though the data-flow graph.
Viper uses explicit permissions for shared state, thus we explicitly pass a reference to the data-flow graph (Line~\ref{line:graph-ref}); it declares write permissions for all reactives that are modified by the interaction and read permissions for the reactives that are only accessed as part of the invariants (Line~\ref{line:pres-permissions})\footnote{
Viper uses \emph{fractional permissions} where an \lstinline{acc(n)} statement with any $n < 1$ corresponds to a read-permission and a statement with $n = 1$ corresponds to a write-permission.}.

\lstinputlisting[float=t, language=Sil, escapechar={§}, caption={Viper representation of the \lstinline{add_work} interaction.}, label=lst:viper-preservation]{listings/preservation.vpr}

\subsubsection{Proving Invariant Preservation and Confluence.}
\label{proving-preservation}
Given the Viper encoding of each interaction -- which includes overlapping invariants -- Viper directly outputs verification results that correspond to invariant preservation (Definition~\ref{def:invariant-preservation}).
Any verification errors at this stage are errors in the supplied specification or programming bugs and must be addressed by the programmer.

To check
for confluence the compiler creates a Viper method for each overlapping pair of interactions.
Such a method models the specification for invariant confluence (Definition~\ref{def:invariant-confluence}).
Any verification errors here indicate that the two interactions are non-confluent.
As our soundness proof requires either confluence of interactions, or their inclusion in the set of conflicting interactions, safety is ensured by marking all non-confluent pairs of interactions as conflicting.
Developers should still check the conflicts to ensure that they are
not due to a bug or specification error.

\subsection{Synchronization at Runtime}
\label{sec:synchronization}

Our compiler generates an executable application by converting
the data-flow graph to a distributed REScala program~\cite{Mogk2018, Mogk2019}.
REScala supports all reactive features we require and integrates well with our CRDT-based replication, but has no mechanism for synchronization.
\langname's formal synchronization semantics (cf. Section~\ref{sec:programming-model}) could be implemented using any existing form of coordination, such as a central server, a distributed ledger, a consensus algorithm, or a distributed locking protocol.
Which choice is suitable, depends on the target application, network size, and the reliability of the chosen transport layer.
We use a simple distributed locking protocol for our prototype implementation:
Each interaction has an associated lock (represented as a simple token).
Whenever a device wants to execute an interaction,
it acquires the tokens of all conflicting interactions.
If multiple devices request the same token concurrently, the token is given to the device with the lowest ID that requested it.
This ensures deadlock freedom; fairness is left for future work.
After performing the interaction, the resulting state changes are synchronized with the other devices and the tokens are made available again.
Timeouts ensure that whenever a device crashes or becomes unavailable for a longer period of time, its currently owned tokens are released and any unfinished interactions by the device are aborted.

\section{Evaluation}
\label{sec:evaluation}

Our evaluation aims to validate two claims about \langname{}'s programming model:
\begin{description}
\item[C1:]
It facilitates the development of safe
local-first software.
\item[C2:] It enables an efficient and modular verification of safety properties.
\end{description}

We base our validation on two case studies.
First, we implemented the standard TPC-C benchmark~\cite{TPCCSpecification}
as a local-first application in \langname.
This case study enables comparing \langname{}'s model with traditional database-centered development of
data processing software and showcasing the benefits of \langname's
verifiable safety guarantees on standard
\emph{consistency conditions}.
Second, we implemented the running calendar example (Section~\ref{sec:goals})
using Yjs \cite{DBLP:conf/group/NicolaescuJDK16}. This case study allows comparing \langname{}
with an existing framework for local-first applications that we consider a representative of the state-of-the-art.

\subsection{Does \langname{}
facilitate the development of safe
local-first software?}

\subsubsection{Local-first TPC-C}
\label{sec:tpc-c}

TPC-C models an order fulfillment system with multiple warehouses in different districts,
consisting of five \emph{database transactions}
alongside twelve \emph{consistency conditions}.
We implemented TPC-C in \langname\ by mapping database tables to source reactives and derived database values to derived reactives.
Each database transaction was modelled as a \langname\ interaction.

In the \langname{} implementation of TPC-C,
each warehouse holds a local copy of the data on which transactions -- modeled as interactions -- are executed  before being synchronized with other warehouses.

\emph{Reactives.}
Figure~\ref{fig:tpc-c-graph} shows the reactive graph of the application.
The structure roughly follows the database structure described in TPC-C.
We represent each database table by a source reactive and model derived database values as derived reactives.
For example, the \lstinline{DistrictYTD} reactive in Listing~\ref{lst:districtYTD} represents the year-to-date (YTD) balance of districts.
After reading the \lstinline{districts} reactive (Line~\ref{line:read-dists}), we perform
the following steps for each district:
We read the relevant entries in the payment history (Line~\ref{line:payment-entries}),
calculate the sum of the YTD values (Line~\ref{line:ytd-sum}),
and return the result as a single entry mapping from district to YTD (Line~\ref{line:dist-tuple}).

\lstinputlisting[caption={The \langname{} representation of the derived \lstinline{DistrictYTD}
 reactive (simplified).}, float=t, language=fr, escapechar={§}, label=lst:districtYTD]{listings/districtYTD.lore}

\emph{Interactions.}
We implement TPC-C transactions as interactions.
For illustration, consider the \lstinline{payment} interaction
in Listing~\ref{lst:payment-trans}, which is applied whenever a customer pays a certain amount to the system.
Payments are not associated to a specific order and are simply stored in the payment history.
Table modifications in TPC-C have multiple arguments, which we encapsulate into an argument of type \lstinline{PaymentArgs} (Line~\ref{line:payment-trans-def}).
Applying the interaction (Line~\ref{line:cust-retrA}) retrieves the customer object matching the payment arguments
by executing a function that accesses the \lstinline{customers} reactive (Line~\ref{line:cust-retrB}).
Line~\ref{line:history-transform} updates and returns the new history.
The preconditions (Line~\ref{line:payment-pre}) encodes assumptions about the arguments,
notably that a customer for whom we add the payment actually exists,
and the postcondition (Line~\ref{line:payment-post}) describes the effect of adding a new payment.

\lstinputlisting[caption={The \langname{} representation of TPC-C's payment interaction (simplified).}, float=t, language=fr, escapechar={§}, label=lst:payment-trans]{listings/payment.lore}

\begin{lstlisting}[caption={TPC-C's consistency condition 5 expressed as a \langname{} invariant.}, float=t, language=fr, escapechar={|}, label=lst:cc5]
invariant forall o: Order :: o in Orders ==> (
    o.carrier_id == 0 <==>
    (exist no: NewOrder :: no in NewOrders && no.id = o.id) )
\end{lstlisting}

\emph{Invariants.}
TPC-C defines 12 consistency conditions, of which
9 are consistency constraints between tables and derived tables.
Their correctness is automatically ensured by \langname{} without further specification.
For example, the consistency condition 9~\cite{TPCCSpecification}:
\begin{quote}
Entries in the DISTRICT and HISTORY tables must satisfy the relationship:\\
D\_YTD = sum (H\_AMOUNT) for each district defined by \\ (D\_W\_ID, D\_ID) = (H\_W\_ID, H\_D\_ID).
\end{quote}
directly corresponds to the definition of the \texttt{DistrictYTD} reactive in Listing~\ref{lst:districtYTD} and is thus always true by design.
Only the remaining 3 conditions require translating the natural language specification into first-order logic formulae to be used as invariants.
As an example, consider consistency condition 5 from the TPC-C specification~\cite{TPCCSpecification}:
\begin{quote}
For any row in the ORDER table, O\_CARRIER\_ID is set to a null value if and only if there is a corresponding row in the NEW-ORDER table [\ldots].
\end{quote}
This condition cannot be represented as a derived reactive because the carrier ID is not derived from other values but set explicitly whenever an order is shipped.
Listing~\ref{lst:cc5} displays the encoding
in \langname{} -- an (almost) literal translation of the specification.

\emph{Comparison.}
The implementation effort for the core functionality of TPC-C in \langname\ is comparable to a traditional design relying on a relational database model.
While modelling the application using reactives might require some adaption from developers not familiar with data-flow programming, we found that using derived reactives led to a more concise and less error-prone design when compared to storing derived values in separate tables.
This observation is supported by the fact that we only need to explicitly address 3 out of 12 consistency conditions of
TPC-C.
We were able to phrase the remaining 3 conditions
as invariants by directly translating the natural language formulations into logical
specifications.
To prove them, we additionally needed to specify pre- and postconditions of interactions corresponding to transactions (see Figure~\ref{lst:payment-trans}).
Other than that, \langname{} relieves the TPC-C developer
from any considerations of
transaction interleavings
that could potentially violate the conditions as well as
from implementing the synchronization logic, both tedious and
error-prone processes.
Moreover, unlike TPC-C, which cares only about the consistency of
the database,
\langname{} treats consistency constraints uniformly from (shared) state to UI.
This
is enabled by the reactive programming paradigm, which also
guarantees 9 out of 12 consistency conditions by design.

\subsubsection{Yjs-based Calendar}
\label{sec:yjs}

\begin{figure}
\centering
\def\svgwidth{1.5\columnwidth}
\scalebox{.66}{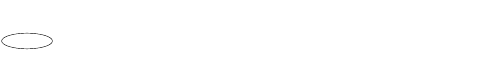}
\caption{The data-flow graph of the TPC-C benchmark application.}
\label{fig:tpc-c-graph}
\end{figure}

\begin{figure}[t]
\centering
\begin{minipage}[t]{0.48\linewidth}
\begin{lstlisting}[caption={Defining source and derived\\variables in Yjs.}, language=js, escapechar={§}, label=lst:calendar-yjs, xleftmargin=0pt, xrightmargin=10pt]
const ydoc = new Y.Doc()
let work = ydoc.getMap('work');§\label{line:yjs-sources1}§
let vacation = ydoc.getMap('vacation');§\label{line:yjs-sources2}§
let all_appointments;
let remaining_vacation = 30;

work.observe(ymapEvent => {§\label{line:yjs-callback1}§
 all_appointments = getMap(work, vacation);
})

vacation.observe(ymapEvent => {
 let days_total = getTotalVacDays(vacation);
 remaining_vacation = 30 - daysTotal;
 all_appointments = getMap(work, vacation);
})§\label{line:yjs-callback2}§
\end{lstlisting}
\end{minipage}
\quad
\begin{minipage}[t]{0.48\linewidth}
\begin{lstlisting}[xleftmargin=10pt, xrightmargin=0pt, caption={Adding appointments in Yjs.}, language=js, escapechar={§}, label=lst:yjs-interactions]
function addAppointment(calendar, appointment) {
 if(appointment.start < appointment.end){§\label{line:yjs-precond1}§
   calendar.set(appointment.id, appointment);
 }
}

function addVacation(appointment)
{
  let days =
    appointment.getDays();
  if(remainingVacation < days){§\label{line:yjs-precond2}§
    console.log("Sorry, no vacation left!");
  }
  else{
    addAppointment(vacation, appointment)
  }
}
\end{lstlisting}
\end{minipage}
\end{figure}

We now compare the \langname\ implementation of the distributed calendar to an implementation using the state of the art local-first framework Yjs~\cite{DBLP:conf/group/NicolaescuJDK16}. Like other solutions for local-first software,
Yjs uses a library of CRDTs (usually maps, sets, sequences / arrays, and counters) composed into nested trees -- called a \emph{document} -- used to model domain objects.

\emph{Source and Derived Variables.}
For illustration, consider Listing~\ref{lst:calendar-yjs}, showing how one could implement the domain model of the calendar application.
Lines~\ref{line:yjs-sources1} and \ref{line:yjs-sources2} initialize two CRDTs for the work and vacation calendar.
Yjs has no abstraction for derived values and only provides callbacks for reacting to value changes,
e.g., Lines~\ref{line:yjs-callback1}-\ref{line:yjs-callback2} declare callback methods that update the derived variables in case the Yjs document changes.

\emph{Safety Guarantees.}
Using callbacks to model and manage complex state that changes  both in time and in space
has issues. It requires that developers programmatically update the derived values
once the sources get updated, via local interactions or on receiving updates from other
devices, with no guarantees that they do so consistently.
It yields a complex control-flow and requires intricate knowledge of the execution
semantics to ensure atomicity of updates, let alone to enforce application-level safety properties.
Frameworks like Yjs do not offer support for application invariants and thus force developers to integrate custom safety measures at each possible source of safety violations.
As an example, consider Listing~\ref{lst:yjs-interactions}, showing how the \lstinline{addVacation} interaction could be implemented in Yjs.
Lines~\ref{line:yjs-precond1} and \ref{line:yjs-precond2} check
the preconditions of the interaction,
but as discussed in Section~\ref{sec:synchronization-points}, such local checks are insufficient to maintain safety. %
In general, global reasoning without tool support quickly becomes infeasible,  and even more
so for designs based on callbacks.
Once updates inducing invariant violations are propagated, it is difficult to ``undo'' them.
Thus, programmers are required to either provide and coordinate compensation
actions or to integrate mechanisms for strong synchronization on top of
CRDT-based eventual consistency.
Both approaches are difficult to get right, putting safety and/or availability at risk.

In summary, while the replication capabilities of systems like Yjs are valuable for local-first applications,
these systems still require the developer to do
state management manually.
The prevailing use of callbacks and implicit dependencies
 makes reasoning about the code challenging for both developers and automatic analyses.
In contrast, \langname\ allows declarative definitions of derived values,
with positive effects on reasoning~\cite{Salvaneschi2017empirical, Dinser2021Thesis}.
Moreover, \langname\ integrates application invariants as explicit language constructs, which allows for a modular specification and verification and relieves developers from having to consider every involved interaction whenever the specification changes.

\subsubsection{On the Ergonomics of \langname's Programming Model}

While a detailed evaluation of \langname's ergonomics is out of the scope of this paper, in this subsection, we briefly outline why we expect  \langname's programming model to be easier to use than current models for local-first software. \langname\ coherently combines the following main abstractions: \emph{reactives}, \emph{interactions} and \emph{invariants}, which provide programmers with a principled way to think about the data flow, (user) interactions, and safety guarantees of their program respectively.
None of these abstractions is in itself unfamiliar:
Firstly, CRDTs -- the basis for \langname's reactive data model -- are popular data structures for modelling decentralized local-first software, available in libraries like Automerge~\cite{automerge}, Collabs~\cite{weidner2022} or Yjs~\cite{DBLP:conf/group/NicolaescuJDK16}.
Secondly, reactive programming is a popular paradigm especially for interactive web applications and is present in widely used frameworks like React~\cite{reactjs}.
Past research has shown in user studies that functional reactive programming (FRP), the variant of reactive programming used in \langname\,  can improve program comprehension when compared to a classic callback-driven programming style as it is present in vanilla JavaScript~\cite{Salvaneschi2017empirical, Dinser2021Thesis}.
\langname\ combines CRDTs and FRP in one programming model. 
By doing so, \langname\ makes it easier to reason about properties of 
compositions of CRDTs involved in an interaction. Finally, \langname\ 
complements the CRDT-FRP combination with a \emph{correctness-by-construction} 
approach~\cite{kourie2012} through its \emph{invariants} and its verifying compiler. 

By making invariants a first-class programming construct, \langname\ 
encourages programmers to write programs alongside their specification.
Given the specification, the verifying compiler helps programmers to refine their program until it meets their expectations.
Depending on the complexity of the application, this verification step introduces some additional specification burden when compared to approaches that do not rely on verification and thus offer weaker guarantees.
In our experience, the specification burden of \langname's model is modest and comparable to other deductive verification approaches~\cite{Barnett2006,pearceWhileyPlatformResearch2013,leino2010} (see Section~\ref{sec:tpc-c} for examples of the required specifications). It is also worth noting that in \langname\, 
specifications and invariants are optional; they can thus be used depending on the desired strength of the correctness guarantees. 
This allows for an incremental workflow where specifications are added step-by-step to an existing codebase.

Summarizing our observations from Sections \ref{sec:tpc-c} and \ref{sec:yjs}, 
\langname's programming model proved very effective for our case studies 
and facilitated the development when compared to a classic relational data model (TPC-C) and existing local-first frameworks (Yjs).
In our experience, the FRP model simplifies state management and composition of CRDTs, 
while the verification process helps port safety-critical applications like TPC-C to a decentralized setting.

\subsection{Does \langname{} enable efficient and modular verification of safety properties?}

Safety invariants in \langname{} are \emph{global} in the sense that they must not be violated by \emph{any} part of the program.
But their enforcement is based on verifying individual local properties.
We limit the need for verification to potential conflicts that we derive from the reactive data-flow graph.
This optimization requires no further reasoning from the programmer and relies solely on the properties of the programming model.
Programmers can add new functionality to the application (i.e., specify interactions) and only have to reason about the properties of that new functionality (i.e., specify its invariants) and the system ensures global safety -- at only the cost of the amount of overlap with existing functionality. This allows a modular programming style where invariants and interactions are written by different developers and changes to the program are made incrementally.

To empirically
evaluate the performance of \langname{}'s verifier,
we quantify how long it takes to verify different
combinations of interactions and invariants of our two case studies.
The results are shown in Table~\ref{tbl:verification-time}.
The calendar example has two additional types of interactions,
 which we have not shown in Section~\ref{sec:goals}: removing and changing calendar entries.
This leads to a total of 6 interactions (3 per calendar reactive).
As explained in the previous section, for TPC-C we only had to verify consistency
conditions 3, 5, and 7.
The benchmarks were performed on a desktop PC with an AMD Ryzen 7 5700G CPU and 32 GB RAM using Viper's \emph{silicon} verification backend (release v.23.01)~\cite{viperprojectSilicon}.

\begin{table}
\centering
\small
\setlength{\tabcolsep}{9pt}
\caption{\label{tbl:verification-time}Seconds to verify combinations of interactions and invariants of the two example applications. Each entry represents the mean verification time over 5 runs with the deviation shown in parentheses.}
\begin{tabularx}{.8\columnwidth}{ Xccc }
\toprule
\multicolumn{4}{c}{\textbf{Distributed Calendar}}\\
\midrule
\textbf{Interaction} & \multicolumn{3}{c}{\textbf{Invariant}} \\
\midrule
& 1 & 2&\\
\cmidrule{2-3}
Add vacation & 3.32 ($\pm$ 0.05) & 2.97 ($\pm$ 0.03)&\\
Remove vacation & 3.28 ($\pm$ 0.06) & 3.00 ($\pm$ 0.02)&\\
Change vacation & 3.32 ($\pm$ 0.05)& 3.04 ($\pm$ 0.03)&\\
Add work & 3.31  ($\pm$ 0.04) & -- & \\
Remove work & 3.30 ($\pm$ 0.06)& -- & \\
Change work & 3.34 ($\pm$ 0.06)& -- &\\
\midrule
\multicolumn{4}{c}{\textbf{TPC-C}}\\
\midrule
\textbf{Interaction} & \multicolumn{3}{c}{\shortstack{\textbf{Consistency} \textbf{Condition}}} \\
\midrule
& 3 & 5 & 7 \\
\cmidrule{2-4}
New Order & 45.4 ($\pm$ 63.69) & 7.63 ($\pm$ 0.11) & 14.49 ($\pm$ 7.31)\\
Delivery & 5.78 ($\pm$ 0.03) & 5.74 ($\pm$ 0.07) & 5.76 ($\pm$ 0.09)\\
\bottomrule
\end{tabularx}
\end{table}

\emph{Results.}
Verification times differed depending mainly on the complexity and length of the transactions and invariants under consideration.
Differences become apparent especially when looking at the results for TPC-C.
Proofs involving the \emph{New Order} interaction, which is the most "write-heavy" interaction of TPC-C that changes many source reactives at once, generally took longer to verify than others. For \emph{New Order}, we also observe a much higher deviation of up to 64 seconds which we assume to be caused by internal Z3 heuristics\footnote{
These could likely be improved by annotating quantifiers in invariants and pre-/postconditions with hand-crafted trigger expressions~\cite{MuellerSchwerhoffSummers16}.}.
When interpreting the results, it is important to note that
each interaction/invariant combination has to be verified only once and independently of other combinations.
Large-scale applications can be verified step-by-step by splitting them into smaller pieces.
Furthermore, we limit the need for verification to potential conflicts that we derive from the reactive data-flow graph.
Programmers can add new functionality to the application (i.e., specify interactions) and only have to reason about the properties of that new functionality (i.e., specify its invariants) and the system ensures global safety -- at only the cost of the amount of overlap with existing functionality.
This allows for an incremental development style, where only certain parts of programs have to be (re-)verified, when they have been
changed or added.

\section{Related Work}
\label{section-related-work}

Our work
relates to three areas:
distributed data types, formal reasoning, and language-based approaches.
Sections below relate work from each area to respective aspects of our approach.

\hypertarget{consistency-through-distributed-data-types}{%
\subsection{Consistency Through Distributed Data
Types}\label{consistency-through-distributed-data-types}}

\emph{Conflict-Free Replicated Data Types (CRDTs)}~\cite{Shapiro2011,preguica2018}
are a building block for
constructing systems and applications that guarantee eventual consistency.
CRDTs are used in distributed database systems such as
\emph{Riak}~\cite{Klophaus2010} and \emph{AntidoteDB}~\cite{Akkoorath2016}.
These databases make it possible to construct applications that behave under mixed consistency,
but unlike our approach, they leave reasoning about application guarantees to the programmer.
\authorcite[Gomes et al.]{Gomes2017} and \authorcite[Nair et al.]{Nair2020} propose frameworks for formally verifying the correctness of CRDTs
while \emph{VeriFx}~\cite{deporre2023} and \emph{Propel}~\cite{zakhour2023} are dedicated language-based solutions for specifying and verifying replicated data types.

Several works suggest distributed data types that extend the
capabilities of CRDTs:
\emph{Mergeable Replicated Data Types (MRDTs)}~\cite{Kaki:2019:MRD:3366395.3360580}
automatically synthesize merge functions from
relational specifications of the data type.
\emph{Katara}~\cite{laddad2022} synthesizes a verified CRDT from the specification of a sequential data type.
\authorcite[De Porre et al.]{DePorre2019} suggest \emph{strong
eventually consistent replicated objects (SECROs)} relying on a replication protocol that tries to find a valid total order of all operations.
Follow-up work on \emph{explicitly consistent replicated objects (ECROs)}\cite{DBLP:journals/pacmpl/PorreFPB21} reorders operations if possible and add coordination in cases where reordering is not sufficient.
Similarly, \emph{Hamsaz}~\cite{Houshmand2019} combines the specification of a sequential object with high-level invariants to synthesize a replicated object that satisfies the invariants.
All approaches above tie consistency and safety properties to specific data types/objects.
This is not sufficient to guarantee end-to-end correctness of an entire local-first application - consistency bugs can still manifest in derived information (e.g., in the user interface).

\hypertarget{formal-reasoning-about-consistency-levels}{%
\subsection{Automated Reasoning about Consistency
Levels}\label{formal-reasoning-about-consistency-levels}}

Our formalization
is in part inspired by the work of \authorcite[Balegas et al.]{Balegas2015, balegas2015a} on \emph{Indigo}.
The work introduces a database middleware consisting of transactions and invariants
to determine the ideal consistency level - called \emph{explicit consistency}.
They build on the notion of \emph{invariant-confluence} for
transactions that cannot harm an invariant which was first introduced by \authorcite[Bailis et al.]{bailis2014}.
While they work on a database level, we show how to integrate this reasoning
approach into a programming language (Section~\ref{sec:programming-model}).
An important difference between our \emph{invariant-confluence} and the one by
\authorcite[Balegas et al.]{Balegas2015} is that our approach also verifies local preservation of invariants, whereas their reasoning principle assumes invariants to always hold in a local context.
In a more recent work called \emph{IPA}, \authorcite[Balegas et al.]{Balegas2018} propose a
static analysis technique that aims at automatically repairing
transaction/invariant conflicts without adding synchronization between
devices.
We consider this latter work complementary to ours.
\authorcite[Whittaker and Hellerstein]{Whittaker2018} also build on the idea
of invariant-confluence and extend it to the concept of \emph{segmented
invariant-confluence}. Under segmented invariant-confluence, programs
are separated into segments that can operate without coordination and
coordination only happens in between the segments. The idea is similar
to our definition of \emph{conflicting interactions}, however,
their procedure cannot suggest a suitable program segmentation, but requires developers to supply them.

\authorcite[Gotsman et al.]{Gotsman2016} propose an alternative formalization and proof rule which allows reasoning about the invariant-safety of database transactions. Their formalization employs a token-based synchronization protocol similar to ours (see Section~\ref{sec:programming-model}).
The \emph{SIEVE} framework~\cite{Li2014} uses invariants and program annotations to infer where a Java program can safely
operate under CRDT-based replication and where strong consistency is
necessary. They do so by relying on a combination of static and dynamic
analysis techniques. Compared to \emph{SIEVE}, our formal reasoning
does not require any form of dynamic analysis.
\emph{Blazes}~\cite{Alvaro2014}
is another analysis framework that uses programmer supplied specifications to
determine where synchronization is necessary to ensure eventual consistency.
Unlike \emph{Blazes}, \langname{} ensures that programs
are "by design" at least eventually consistent,
while also allowing the expression and analysis of programs that need stronger consistency.
\emph{Q9}~\cite{Kaki2018} is a bounded
symbolic execution system, which identifies invariant violations caused
by weak consistency guarantees.
Similar to our work, \emph{Q9} can determine where exactly stronger consistency guarantees are needed to maintain certain application invariants.
However, its verification technique is bound by the number of possible concurrent operations.
\langname\ can provide guarantees for an unlimited amount
of devices with an unlimited amount of concurrent operations (see
Section~\ref{sec:programming-model}).

\hypertarget{language-abstractions-for-data-consistency}{%
\subsection{Language Abstractions for Data
Consistency}\label{language-abstractions-for-data-consistency}}

We categorize language-based approaches based on how they achieve consistency
and on the level of programmer involvement.

\emph{Manual Choice of Consistency Levels.}
Approaches in this category expose consistency levels as language abstractions.
\authorcite[Li et al.]{Li2012}
propose \emph{RedBlue Consistency} where programmers manually label
their operations to be either blue (eventually consistent) or red
(strongly consistent).
In MixT~\cite{Milano2018}, programmers annotate classes with different consistency levels and the system uses an information-flow type system to ensure that the requested guarantees are maintained. However, this still requires expert knowledge about each consistency level, and wrong choices can violate the intended program semantics.
Other approaches~\cite{Myter2018, DBLP:journals/pacmpl/KohlerEWMS20}
expect programmers to choose between \emph{consistency} and
\emph{availability}, again leaving the reasoning duty about consistency levels to the programmer.
Compared to \langname{}, languages in this category place higher burden on programmers: They decide which operation needs which consistency level, a non-trivial and error-prone selection.

\emph{Automatically Deriving Consistency from Application Invariants.}
Approaches in this category relieve programmers from reasoning about consistency levels
based on some form of programmer annotations about application invariants.
\emph{CAROL} \cite{Lewchenko2019} uses CRDTs to replicate data and features a refinement typing discipline
for expressing safety properties similar to our \emph{invariants}.
Carol makes use of pre-defined data types with \emph{consistency guards} used by the type system to check for invariant violations.
The compatibility of datatype operations and consistency guards is verified ahead of time using an algorithm for the Z3 SMT solver.
This approach hides much of the complexity from the programmer, but the abstraction breaks once functionality that is not covered by a pre-defined datatype is needed.
Unlike Carol, \langname{} does not rely on predefined consistency guards,
but allows the expression of safety properties as arbitrary logical formulae.
Additionally, \emph{CAROL} only checks the concurrent interactions of a program for invariant violations,
whereas \langname{} verifies the overall application including non-distributed parts.
\authorcite[Sivaramakrishnan et al.]{sivaramakrishnanDeclarativeProgrammingEventually2015} propose \emph{QUELEA}, a declarative language for programming on top of eventually consistent datastores.
It features a contract-language to express application-level invariants and automatically generates coordination strategies in cases where invariants could be violated by concurrent operations.
\emph{QUELEA}'s contract-language requires programmers to express the desired properties using low-level visibility relations,
which can be challenging to get right for non-experts.
\langname{} avoids this intermediate reasoning and automatically derives the right level of consistency for satisfying high-level safety invariants to enable end-to-end correctness.

\emph{Automating Consistency by Prescribing the Programming Model.}
Languages in this category
seek to
automate consistency decisions by
prescribing a certain programming model such that certain consistency problems are impossible to occur.
In \emph{Lasp}~\cite{Meiklejohn2015},
programmers model the data flow of their applications using combinator functions on CRDTs.
Programs written in \emph{Lasp} always provide eventual consistency but
contrary to \langname, \emph{Lasp} does not allow arbitrary compositions of distributed data types.
\emph{Bloom}~\cite{Alvaro2011}
provides programmers with ways to write programs that are \emph{logically monotonic}
and therefore offer automatic eventual consistency.
Both \emph{Lasp} and \emph{Bloom}, however, are not meant to formulate programs that need stronger consistency guarantees.
\langname{} is similar to \emph{Lasp} and \emph{Bloom}
in the sense that we also prescribe a specific -- reactive -- programming style.
However, our programming model is less restrictive and allows arbitrary compositions of distributed datatypes. This is enabled by leveraging the composability properties of reactive data-flow graphs. Secondly, \langname{} provides a principled way to express hybrid consistency applications with guarantees stronger than eventual consistency. %
\authorcite[Drechsler et al.]{Drechsler2018} and \authorcite[Mogk et al.]{Mogk2018,Mogk2019} also use a reactive programming model
similar to ours
to automate consistency in presence of multi-threading respectively of a distributed execution setting.
However,
they do
not support a hybrid consistency model. \authorcite[Drechsler et al.]{Drechsler2018} enable strong consistency (serializability)
only, while \authorcite[Mogk et al.]{Mogk2018,Mogk2019} support only eventual consistency.

\hypertarget{sec:chap:conclusion}{%
\section{Conclusion and Future Work}\label{sec:chap:conclusion}}

In this paper, we proposed \langname{}, a language for local-first software with verified safety guarantees. 
\textit{\langname} combines the declarative data flow of reactive programming with static analysis and
verification techniques 
to precisely determine concurrent interactions 
that could violate programmer-specified safety properties.
We presented a formal definition of the programming model
and a modular verification that detects concurrent executions that may violate application invariants. 
In case of invariant violation due to concurrent execution, 
\langname{} automatically enforces the necessary amount of coordination.
\langname{}'s verifying compiler  translates \langname{} programs to Viper~\cite{MuellerSchwerhoffSummers16} 
for automated verification and to Scala for the application logic including synthesized synchronization to guarantee the specified 
safety invariants. An evaluation of \langname{}'s programming model in two case studies confirms that it facilitates the
development of safe local-first applications and enables efficient and modular automated 
reasoning about an application's safety properties.
Our evaluation shows that verification times are acceptable
and that the verification effort required from developers is reasonable.

In the future, it would be desirable to integrate existing libraries of verified CRDTs~\cite{Gomes2017} or even solutions that allow ad-hoc verification of CRDT-like data types~\cite{Nair2020,laddad2022}.
This would enable us to support a wider range of data types or even allow programmers to use custom distributed data types, 
which can be verified to be eventually consistent.
Furthermore, our current data-flow analysis is limited to static data-flow graphs.
While static reasoning about dynamic graphs is impossible in the general case, 
most applications make systematic use of dynamic dependencies, 
and we believe it would be feasible to support common cases.

\bibliographystyle{ACM-Reference-Format}
\bibliography{bibliography}

\end{document}